\documentclass[letterpaper, 10 pt, conference]{ieeeconf}
\IEEEoverridecommandlockouts                              % This command is only needed if 
\overrideIEEEmargins                                      % Needed to meet printer requirements.

\usepackage{amsmath}
\usepackage{graphicx} 
 \usepackage{hyperref}
\usepackage{amssymb}  
\usepackage{algorithm,algorithmic}
\usepackage{mathtools}
\usepackage{siunitx}
\sisetup{output-exponent-marker=\ensuremath{\mathrm{e}}}
%\mathtoolsset{showonlyrefs}
\usepackage{tabularx}
\usepackage[usenames,dvipsnames]{xcolor}
\usepackage{pgfplots}
\usepgfplotslibrary{fillbetween}
\pgfplotsset{compat = newest}
\usetikzlibrary{arrows,positioning,shapes,intersections,patterns,calc,fit,external,decorations,decorations.markings} 
\tikzexternalize[prefix=tikzextern/]
\usepackage{cite}

%_own definitions__________

\newtheorem{rem}{Remark}
\newtheorem{thm}{Theorem}
\newtheorem{lem}{Lemma}

\newtheorem{propy}{Property}
\newtheorem{assum}{Assumption}

\newcommand\tran{\mkern-2mu\raise1.25ex\hbox{$\scriptscriptstyle\top\hspace{0.5mm}$}\mkern-3.5mu}
\newcommand{\R}{\mathbb{R}}

\newcommand{\N}{\mathbb{N}}

\newcommand{\D}{\mathcal{D}}
\newcommand{\X}{\mathcal{X}}

\newcommand{\bm}[1]{{\boldsymbol{#1}}}
\newcommand{\Verts}[1]{{\left\Vert #1 \right\Vert}}
\DeclareMathOperator{\diag}{diag}
\DeclareMathOperator{\var}{var}
\DeclareMathOperator{\mean}{\mu}
\DeclareMathOperator{\Var}{\Sigma}
\DeclareMathOperator{\Mean}{\bm\mu}

\DeclareMathOperator{\eig}{eig}

\DeclareMathOperator{\Prob}{P}

\newcommand{\z}{\bm z}

\newcommand{\x}{\bm x}

\newcommand{\dx}{\dot{\bm x}}

\newcommand{\f}{\bm{f}}
\newcommand{\g}{\bm{g}}

\newcommand{\y}{\bm{y}}

\usepackage[noabbrev]{cleveref} 
\crefname{rem}{Remark}{Remarks}
\crefname{exam}{Example}{Examples}
\crefname{assum}{Assumption}{Assumptions}
\crefname{prop}{Proposition}{Propositions}
\crefname{propy}{Property}{Properties}
\crefname{cor}{Corollary}{Corollaries}
\crefname{lem}{Lemma}{Lemmas}
\crefname{section}{Section}{Sections}
\crefname{thm}{Theorem}{Theorems}
\crefname{defn}{Definition}{Definitions}
\crefname{figure}{Fig.}{Fig.}
\Crefname{figure}{Figure}{Figures}
\crefname{equation}{}{}

\title{\LARGE \bf Online learning-based trajectory tracking for underactuated vehicles with uncertain dynamics}

\author{Thomas Beckers$^{1}$, Leonardo J. Colombo$^{2}$, Sandra Hirche$^{3}$, George J. Pappas$^{1}$%
\thanks{$^{1}$ are with the Department of Electrical and Systems Engineering, University of Pennsylvania, Philadelphia, PA 19104, USA, {\tt\small \{tbeckers,pappasg\}@seas.upenn.edu}}%
\thanks{$^{2}$ is with the Instituto de Ciencias Matematicas (CSIC-UAM-UCM-UC3M), Calle Nicolas Cabrera 13-15, Campus Cantoblanco, Madrid, 28049, Spain, {\tt\small leo.colombo@icmat.es}}
\thanks{$^{3}$ is with the Chair of Information-oriented Control, Technical University of Munich, Munich, 80333, Germany, {\tt\small hirche@tum.de}}%
}

\begin{document}

\maketitle
\thispagestyle{empty}
\pagestyle{empty}

%The abstract of your paper
\begin{abstract}
Underactuated vehicles have gained much attention in the recent years due to the increasing amount of aerial and underwater vehicles as well as nanosatellites. Trajectory tracking control of these vehicles is a substantial aspect for an increasing range of application domains. However, external disturbances and parts of the internal dynamics are often unknown or very time-consuming to model. To overcome this issue, we present a tracking control law for underactuated rigid-body dynamics using an online learning-based oracle for the prediction of the unknown dynamics. We show that Gaussian process models are of particular interest for the role of the oracle. The presented approach guarantees a bounded tracking error with high probability where the bound is explicitly given. A numerical example highlights the effectiveness of the proposed control law.
\end{abstract}

%%%%%%%%%%%%%%%%%%%%%%%%%%%%%%%%%%%%%%%%%%%%%%%%%%
\section{Introduction}
The demand for unmanned aerial and underwater vehicles is rapidly increasing in many areas such as  monitoring, mapping, agriculture, and delivery. These vehicles are typically underactuated due to constructional reasons which poses several challenges from the control perspective~\cite{reyhanoglu1999dynamics}. The dynamics of these systems can often be expressed by rigid bodies motion with full attitude control and one translational force input. This is a classical problem in underactuated mechanics and many different types of control methods have been proposed to achieve an accurate trajectory tracking. Most of the control approaches are mainly based on feedback linearization~\cite{lee2009feedback} and backstepping methods~\cite{raffo2008backstepping} which are also analyzed in terms of stability, e.g., in~\cite{frazzoli2000trajectory}.

However, these control approaches depend on exact models of the systems and possible external disturbances to guarantee stability and precise tracking. An accurate model of typical uncertainties is hard to obtain by using first principles based techniques. Especially the impact of air/water flow on aerial/underwater vehicles or the interaction with unstructured and a-priori unknown environment further compound the uncertainty. The increase of the feedback gains to suppress the unknown dynamics is unfavorable due to the large errors in the presence of noise and the saturation of actuators. A suitable approach to avoid the time-consuming or even unfeasible modeling process is provided by learning-based oracles such as neural networks or Gaussian processes~(GPs). These data-driven modeling tools have shown remarkable results in many different applications including control, machine learning and system identification. In data-driven control, data of the unknown system dynamics is collected and used by the oracle to predict the dynamics in areas without training data. In contrast to parametric models, data-driven models are highly flexible and are able to reproduce a large class of different dynamics, see~\cite{hou2013model}.

The purpose of this article is to employ the power of online learning-based approaches for the tracking control for a class of underactuated systems. At the end, stability and a desired level of performance of the closed-loop system should be guaranteed. The problem of tracking control of underactuated aerial/underwater vehicles with uncertainties has been addressed in~\cite{mahony2004robust,kobilarov2013trajectory} but these approaches are restricted to structured uncertainties such as uncertain parameters or use high feedback gains for compensation. Safe feedback linearization based on GPs are introduced in~\cite{umlauft:TAC2020,greef2021letters} for a specific class of systems but they do not capture the general underactuated nature of the here considered model class and are limited to single input systems. 

Learning-based approaches for Euler-Lagrange systems with stability guarantees are presented in~\cite{beckers2019automatica,helwa-ral19,han2021stable}. However, the systems are required to be fully actuated or are limited to the class of balancing robots. For a specific type of aerial vehicles, a safe Gaussian process based controller is proposed in~\cite{berkenkamp-icra16} but with additional assumptions such as an initial safe controller. The contribution of this article is a online learning-based tracking control law for a large class of underactuated vehicles with stability and performance guarantees. Instead of focusing on a particular type of oracle, the proposed approach allows the usage of various learning-based oracles. The online learning approach allows to improve the model and, thus, the tracking performance during runtime.

The remaining article is structured as follows: After the problem setting in~\cref{sec:ps}, the online learning-based oracle and the tracking controller are introduced in~\cref{sec:lbc}. Finally, a numerical example is presented in~\cref{sec:num}.

%%%%%%%%%%%%%%%%%%%%%%%%%%%%%%%%%%%%%%%%%%%%%%%%%%
\section{Problem Setting}\label{sec:ps}
We assume a single underactuated rigid body with position\footnote{Vectors are denoted with bold characters and matrices with capital letters. The term~$A_{i,:}$ denotes the i-th row of the matrix~$A$. The expression~$\mathcal{N}(\mu,\Sigma)$ describes a normal distribution with mean~$\mu$ and covariance~$\Sigma$. The probability function is denoted by $\Prob$. The set $\R_{>0}$ denotes the set of positive real numbers.} $\bm{p}\in\R^3$ and orientation matrix $R\in SO(3)$. The body-fixed angular velocity is denoted by $\bm{\omega}\in\R^3$. The vehicle has mass $m\in\R_{>0}$ and rotational inertia tensor $J\in\R^{3\times 3}$. The state space of the vehicle is $S=SE(3)\times\R^6$ with $\bm{s}=((R,\bm{p}),(\bm{\omega},\dot{\bm{p}}))\in S$ denoting the whole state of the system. The vehicle is actuated with control torques $\bm{\tau}\in\R^3$ and a control force $u\in\R$, which is applied in a body-fixed direction defined by a unit vector $\bm{e}\in\R^3$.
We can model the system as
\begin{align}
	\begin{split}
	m \ddot{\bm{p}}&=R\bm{e}u+\f(\bm{p},\dot{\bm{p}})\\
	\dot{R}&=R\check{\bm{\omega}}\\
	\dot{\bm{\omega}}&=J^{-1}\big( J \bm{\omega}\times \bm{\omega}+\bm{\tau}+\f_\omega(\bm{s})\big),\label{for:system}
	\end{split}
\end{align}
where the map $\check{(\cdot)}\colon\R^3\to \mathfrak{so}(3)$ is given by
\begin{align}
	\check{\bm{\omega}}=\begin{bmatrix}
	0 & -\omega_3 & -\omega_2\\
	\omega_3 & 0 & -\omega_1\\
	-\omega_2 & \omega_1 & 0
	\end{bmatrix},
\end{align}
with the components of the angular velocity $\bm{\omega}=[\omega_1,\omega_2,\omega_3]^\top$. The functions $\f\colon\R^6\to\R^3$ and $\f_\omega\colon S\to\R^3$ are state-depended disturbances and/or unmodeled dynamics. It is assumed that the full state $\bm{s}$ can be measured. The general objective is to track a trajectory specified by the functions $(R_d,\bm{p}_d)\colon [0,T]\to SE(3)$. For simplicity, we focus here on position tracking only. The extension to rotation tracking is straightforward and will be discussed later.
\subsection{Equivalent system}
In preparation for the learning and control step, we transform the system dynamics~\cref{for:system} in an equivalent form. For the unknown dynamics $\f$ and $\f_\omega$, we use the estimates $\hat{\f}\colon\R^6\to\R^3$ and $\hat{\f}_\omega\colon S\to\R^3$, respectively, of an oracle. The estimation error is moved to $\bm{\rho}(\x)=\f(\x)-\hat{\f}(\x)$ and $\bm{\rho}_\omega(\bm{s})=\f_\omega(\bm{s})-\hat{\f}_\omega(\bm{s})$. With the system matrix $A\in\R^{6\times 6}$ and input matrix $B\in\R^{6\times 3}$ given by
\begin{align}
	A=\begin{bmatrix}
	0 & I_3\\ 0 & 0
	\end{bmatrix},\quad B=\begin{bmatrix}
	0 \\ \frac{1}{m}I_3
	\end{bmatrix},
\end{align}
and $I_3\in\R^{3\times 3}$ as identity matrix, we can rewrite~\cref{for:system} as 
\begin{align}
	\begin{split}
	\dx&=A\x+B\big(\g(R,u)+\hat{\f}(\x)+\bm{\rho}(\x)\big)\\
	\dot{R}&=R\check{\bm{\omega}}\\
	\dot{\bm{\omega}}&=J^{-1}\big( J \bm{\omega}\times \bm{\omega}+\bm{\tau}+\hat{\f}_\omega(\bm{s})+\bm{\rho}_\omega(\bm{s})\big),\label{for:system1}
	\end{split}
\end{align}
where $\x=[\bm{p}^\top,\dot{\bm{p}}^\top]^\top\in\R^6$, $\bm{s}\in S$ and $\bm{g}\colon SO(3)\times \R\to\R^3$ is a virtual control input with $\bm{g}(R,u)\coloneqq R\bm{e}u$. As consequence,~\cref{for:system1} is equivalent to~\cref{for:system} without loss of generality. 
\section{Learning-based control}\label{sec:lbc}
An overview about the proposed control strategy is depicted in~\cref{fig:block}. As introduced in the vehicle's dynamics~\cref{for:system1}, we assume that parts of the dynamics are known and some are unknown, i.e., $\f$ and $\f_\omega$. The proposed control strategy is based on a backstepping controller using an internal model that is updated by the predictions of an oracle. The data for the oracle is collected in arbitrary time intervals of the vehicle's dynamics during the control process. Then, the predictions of the oracle are updated based on the collected dataset and the vehicle model is improved. In the following, we present the online learning and control part in detail.
\begin{figure}[t]
\begin{center}
\tikzsetnextfilename{block}
	\begin{tikzpicture}[auto, node distance=2cm,>=latex']
	\tikzstyle{block} = [draw, fill=white, rectangle,  line width=1pt, 
    minimum height=2.5em, minimum width=5em, font=\small,align=center]
    \tikzstyle{mux} = [draw, fill=black, rectangle,  line width=1pt, 
    minimum height=2.5em,inner sep=1pt, minimum width=0.01cm]
	\tikzstyle{sum} = [draw, fill=white, circle, node distance=1cm]
	\tikzstyle{circ} = [draw=red, fill=white, circle, inner sep=1pt, node distance=1cm]
	\tikzstyle{cross} = [draw, fill=black, circle,inner sep=1pt, node distance=1cm]
	\tikzstyle{input} = [coordinate]
	\tikzstyle{output} = [coordinate]
	\tikzstyle{mid} = [coordinate]

    % We start by placing the blocks
    \node [input, name=input] {};
    \node [sum, name=midp1,right of=input, node distance=0.8cm,label={[label distance=-0.01cm]200:$-$}] {};
    \node [mid, below of=midp1, node distance=2cm] (midp4) {};
    \node [block, right of=input, node distance=2.5cm] (controller) {Backstepping};
    \node [block, draw=red, above of=controller, node distance=1.5cm] (vmodel) {Vehicle\\model};
    \node [coordinate, above left=0.12cm and 0.27cm of vmodel, node distance=1.2cm] (f2) {};
    \node [coordinate, below right=0.12cm and 0.27cm of vmodel, node distance=1.2cm] (f1) {};
    \node [block, right of=controller, node distance=3.3cm] (knownd) {Known\\dynamics};
    \node [mid, right of=knownd, node distance=1.1cm] (midp2) {};
    \node [mux, right of=midp2, node distance=0.5cm] (midp3) {};
    \node [block, below of=knownd, node distance=1.3cm] (unknownd) {Unknown\\dynamics};
    \node [block, draw=red, above of=knownd, node distance=1.5cm] (dataset) {Dataset};
    \node [block, draw=red, above of=dataset, node distance=1.5cm] (model) {Oracle};

    \draw [->] (input) -- node[pos=0.3] {$\x_d$} (midp1);
    \draw [->] (midp1) --  node[pos=0.5]  {$\bm{z}_0$} (controller);
    \draw [->] (controller) -- node[pos=0.3]  {$\bm{u},\bm{\tau}$} (knownd);
    \draw [->] (unknownd) -- node[pos=0.5,right]  {\small $\f,\f_\omega$} (knownd);
    \draw [] (knownd) -- (midp2) -- (midp3);
    \draw [->] (midp2) |- (unknownd);
    \draw [->,red] ([yshift=0.1cm] midp3.east) -- ++(0.2cm,0) |- node[pos=0.1,right]  {$\bm{s}$} (model);
    \draw [red] (model.west) -| node[pos=0.3, above]  {$\hat{\f}(\x),\hat{\f}_\omega(\bm{s})$} (f2);
    \draw [->,red] (f2) -- (f1);
    \draw [->,red] (vmodel.south) -- (controller.north);
    \draw [] ([yshift=-0.1cm] midp3.east) -- ++(0.2cm,0) |- node[pos=0.15]  {$\bm{x}$} (midp4);
    \draw [->] (midp4) -- (midp1);
    
    \node[draw,dashed,line width=1pt,xshift=0.05cm,inner xsep=0.3cm,inner ysep=0.05cm,fit=(knownd) (unknownd),label={[label distance=-0.01cm]190:Vehicle}] (box) {};
    \node [circ, above of=box, node distance=1.3cm,xshift=-0.05cm] (circ1) {};
    \node [circ, below of=dataset, node distance=0.6cm] (circ2) {};
    \draw [red] ([xshift=-0.05cm]box.north) -- (circ1.south);
    \draw [red] (circ1.north) -- node[pos=0.5,left]  {\small Collection} ([xshift=0.04cm]circ2.east);
    \draw [red] (circ2.north) -- (dataset.south);
    \node [circ, above of=dataset, node distance=0.6cm] (circ3) {};
    \node [circ, below of=model, node distance=0.6cm] (circ4) {};
    \draw [red] (dataset.north) -- (circ3.south);
    \draw [red] (circ3.north) -- node[pos=0.4,left]  {\small Oracle update} ([xshift=0.04cm]circ4.east);
    \draw [red] (circ4.north) -- (model.south);
    
\end{tikzpicture}
	\vspace{-0.2cm}
	\caption{Block diagram of the proposed control law.}
	\vspace{-0.7cm}
	\label{fig:block}
\end{center}
\end{figure}
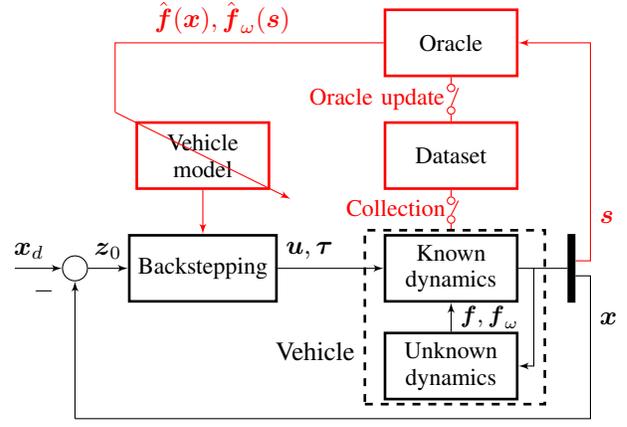
\subsection{Learning}
For the learning of the unknown dynamics of~\cref{for:system}, we consider an oracle which predicts the values of $\f,\f_\omega$ for a given state $\bm{s}$. For this purpose, the oracle collects $N(n)\colon\N\to\N$ training points of the system~\cref{for:system} to create a data set
\begin{align}
\D_{n(t)}=\{\bm{s}^{\{i\}},\y^{\{i\}}\}_{i=1}^{N(n)}.\label{for:dataset}
\end{align}
The output data $\y\in\R^6$ are given by $\y=[(m\ddot{\bm{p}}-R\bm{e}u)^\top,(J(\dot{\bm{\omega}}-\bm{\omega}\times \bm{\omega})-\bm{\tau})^\top]^\top$ such that the first three components of $\y$ correspond to $\f$ and the remaining to $\f_\omega$. The data set $\D_{n(t)}$ with $n\colon \R_{\geq 0}\to\N$ can change over time $t$, such that at time $t_1\in\R_{\geq 0}$ the data set $\D_{n(t_1)}$ with $N(n(t_1))$ training points exists. This allows to accumulate training data over time, i.e., $N(n)$ is monotonically increasing, but also "forgetting" of training data to keep $N(n)$ constant. The time-dependent estimates of the oracle is denoted by $\hat{\f}_n(\x)$ and $\hat{\f}_{\omega,n}(\bm{s})$ to highlight the dependence on the corresponding data set $\D_{n}$. Note that this construction also allows offline learning, i.e. the prediction of the oracle depends on previous collected data only, or any hybrid online/offline approach.
\begin{rem}
Simple oracles can be parametric models such as a linear model, where the parameters are learned with a least-square approach based on the data set $\D_n$. More powerful oracles 
are given by neural networks, due to their universal function approximation property~\cite{scarselli1998universal}. Furthermore, non-parametric oracles such as Gaussian processes and support vector machines have led to promising results as probabilistic function approximators~\cite{rasmussen2006gaussian,steinwart2008support}.
\end{rem}
For the later stability analysis of the closed-loop, we introduce the following assumptions, which cover various types of oracles.
\begin{assum}\label{ass:1}
	Consider an oracle with the predictions $\hat{\f}_n\in\mathcal{C}^2$  and $\hat{\f}_{\omega,n}\in\mathcal{C}^0$ based on the data set $\D_n$~\cref{for:dataset}. Let $S_\X\subset (SE(3)\times (\X\subset\R^6))$ be a compact set where the derivatives of $\hat{\f}_n$ are bounded on $\X$. There exists a bounded function $\bar{\rho}_n\colon S_\X\to\R_{\geq 0}$ such that the prediction error is given by
	\begin{align}
		\Prob\Bigg\lbrace\left\Vert\begin{bmatrix} \f(\x)- \hat{\f}_n(\x) \\ \f_\omega(\bm{s})- \hat{\f}_{\omega,n}(\bm{s})\end{bmatrix}\right\Vert\leq \bar{\rho}_n(\bm{s})\Bigg\rbrace\geq \delta
	\end{align}
	with probability $\delta\in (0,1]$ for all $\x\in\X,\bm{s}\in S_\X$ and $n(t)$.
\end{assum}
\begin{assum}\label{ass:2}
The number of data sets $\D_n$ is finite and there are only finitely many switches of~$n(t)$ over time, such that there exists a time $T\in\R_{\geq 0}$ where $n(t)=n_{\text{end}},\forall t\geq T$
\end{assum}
\Cref{ass:1} is fulfilled, for instance, by a Gaussian process model as oracle as shown in the next section. The second assumption is little restrictive since the number of sets is often naturally bounded due to finite computational power or memory limitations and since the unknown functions $\f,\f_\omega$ in~\cref{for:system} is not time-dependent, long-life learning is typically not required. Furthermore,~\cref{ass:2} ensures that the switching between the data sets is not infinitely fast which is natural in real world applications.
\subsection{Gaussian process as oracle}
Gaussian process models have been proven as very powerful oracle for nonlinear function regression. For the prediction, we concatenate the~$N(n)$ training points of $\D_n$ in an input matrix~$X=[\bm{s}^1,\bm{s}^2,\ldots,\bm{s}^{N(n)}]$ and a matrix of outputs~$Y^\top=[\y^1,\y^2,\ldots,\y^{N(n)}]$, where $\y$ might be corrupted by additive Gaussian noise with $\mathcal{N}(0,\sigma I_6)$. Then, a prediction for the output $\y^*\in\R^6$ at a new test point $\bm{s}^*\in S_\X$ is given by
\begin{align}
	\mean_i(\y^*\vert \bm{s}^*,\D_n)&=m_i(\bm{s}^*)+\bm{k}(\bm{s}^*,X)^\top K^{-1}\label{for:gppred}\\
	&\phantom{=}\left(Y_{:,i}-[m_i(X_{:,1}),\ldots,m_i(X_{:,N})]^\top\right)\notag\\
	\var_i(\y^*\vert \bm{s}^*,\D_n)&=k(\bm{s}^*,\bm{s}^*)-\bm{k}(\bm{s}^*,X)^\top K^{-1} \bm{k}(\bm{s}^*,X).\notag
\end{align}
for all $i\in\{1,\ldots,6\}$, where $Y_{:,i}$ denotes the $i$-th column of the matrix of outputs~$Y$. The kernel $k\colon S_\X\times S_\X\to\R$ is a measure for the correlation of two states~$(\bm{s},\bm{s}^\prime)$, whereas the mean function $m_i\colon S_\X\to\R$ allows to include prior knowledge. The function~$K\colon S_\X^N\times  S_\X^N\to\R^{N\times N}$ is called the Gram matrix whose elements are given by $K_{j',j}= k(X_{:, j'},X_{:, j})+\delta(j,j')\sigma^2$ for all $j',j\in\{1,\ldots,N\}$ with the delta function $\delta(j,j')=1$ for $j=j'$ and zero, otherwise. The vector-valued function~$\bm{k}\colon S_\X\times  S_\X^N\to\R^N$, with the elements~$k_j = k(\bm{s}^*,X_{:, j})$ for all $j\in\{1,\ldots,N\}$, expresses the covariance between~$\bm{s}^*$ and the input training data $X$. The selection of the kernel and the determination of the corresponding hyperparameters can be seen as degrees of freedom of the regression. A powerful kernel for GP models of physical systems is the squared exponential kernel. An overview about the properties of different kernels can be found in~\cite{rasmussen2006gaussian}. As we use the oracle in an online setting where new training data is collected over time, the dataset $\D_n$ for the prediction~\cref{for:gppred} changes over time. The GP model allows to integrate new training data in a simple way by exploiting that every subset follows a multivariate Gaussian distribution, see~\cite{rasmussen2006gaussian} for more details.
\begin{rem}
	 The mean function can be achieved by common system identification techniques of the unknown dynamics~$\f,\f_\omega$ as described in~\cite{aastrom1971system}. However, without any prior knowledge the mean function is set to zero, i.e. $m_i(\bm{s})=0$.
\end{rem}
 Based on~\cref{for:gppred}, the normal distributed components~$y^*_i\vert \bm{s}^*,\D_n$ are combined into a multi-variable distribution which leads to $\y^*\vert(\bm{s}^*,\D_n) \sim \mathcal{N} (\bm\mean(\cdot),\Var(\cdot))$, where
\begin{align}
\begin{split}
	\bm \mean(\y^*\vert \bm{s}^*,\D_n)&=[\mean_1(\cdot),\ldots,\mean_{6}(\cdot)]^\top\\
	\Var(\y^*\vert \bm{s}^*,\D_n)&=\diag\left[\var_1(\cdot),\ldots,\var_{6}(\cdot)\right].
\end{split}
\end{align} 
\begin{rem}
For notational simplicity, we consider identical kernels for each output dimension. However, the GP model can be easily adapted to different kernels for each output dimension.
\end{rem}
With the introduced GP model, we are now addressing~\cref{ass:1} using~\cite{steinwart2008support,beckers2019automatica,beckers:ecc2016}. To provide model error bounds, additional assumptions on the unknown parts of the dynamics~\cref{for:system} must be introduced, in line with the no-free-lunch theorem, see~\cite{wolpert1996lack}. 
\begin{assum}\label{as:rkhs}
	The kernel~$k$ is selected such that~$\f,\f_\omega$ have a bounded reproducing kernel Hilbert space (RKHS) norm on~$\X$ and $S_\X$, respectively, i.e.~$\Verts{f_i}_k<\infty,\Verts{f_{\omega,i}}_k<\infty$ for all~$i=1,2,3$.
\end{assum}
The norm of a function in a RKHS is a smoothness measure relative to a kernel~$k$ that is uniquely connected with this RKHS. In particular, it is a Lipschitz constant with respect to the metric of the used kernel. A more detailed discussion about RKHS norms is given in~\cite{wahba1990spline}.~\Cref{as:rkhs} requires that the kernel must be selected in such a way that the functions~$\f,\f_\omega$ are elements of the associated RKHS. This sounds paradoxical since this function is unknown. However, there exist some kernels, namely universal kernels, which can approximate any continuous function arbitrarily precisely on a compact set~\cite[Lemma 4.55]{steinwart2008support} such that the bounded RKHS norm is a mild assumption. Finally, with~\cref{as:rkhs}, the model error can be bounded as written in the following lemma.
\begin{lem}[adapted from~\cite{beckers2019automatica}]
	\label{lemma:boundederror}
	Consider the unknown functions $\f,\f_\omega$ and a GP model satisfying~\cref{as:rkhs}. The model error is bounded by
	\begin{align}
		\Prob\Bigg\lbrace\Bigg\Vert & \Mean\Bigg(\begin{bmatrix}\hat{\f}_n(\x)\\ \hat{\f}_{\omega,n}(\bm{s})\end{bmatrix}\Bigg\vert \bm{s},\D_n\Bigg)-\begin{bmatrix}\f(\x)\\\f_\omega(\bm{s})\end{bmatrix}\Bigg\Vert\notag\\
		&\leq \Bigg\Vert\bm \beta\tran \Var^{\frac{1}{2}}\Bigg(\begin{bmatrix}\hat{\f}_n(\x)\\ \hat{\f}_{\omega,n}(\bm{s})\end{bmatrix}\Bigg\vert \bm{s},\D_n\Bigg)\Bigg\Vert\Bigg\rbrace\geq \delta\notag
	\end{align}
	for~$\x\in\X,\bm{s}\in S_\X,\delta\in(0,1)$ with $\bm\beta\in\R^6$ given by~\cite[Lemma 1]{beckers2019automatica}
\end{lem}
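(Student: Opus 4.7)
The plan is to reduce the vector-valued inequality to six scalar GP model-error bounds---one per output dimension---and then reassemble them using the block-diagonal structure of~$\Var$. Under~\cref{as:rkhs}, each coordinate of the stacked function $\tilde{\bm f}=[\f^\top,\f_\omega^\top]^\top$ lies in the RKHS of the kernel~$k$ with finite norm, so the scalar concentration result cited as~\cite[Lemma 1]{beckers2019automatica} applies independently to every one of the six output dimensions. That scalar result supplies constants $\beta_i$, depending on the RKHS norm, the noise variance~$\sigma^2$, and a per-coordinate confidence parameter $\tilde\delta_i$, such that
\begin{align*}
\Prob\bigl\{\,|\mean_i(\bm{s}\mid\D_n)-\tilde f_i(\bm{s})|\leq \beta_i\sqrt{\var_i(\bm{s}\mid\D_n)}\,\bigr\}\geq \tilde\delta_i
\end{align*}
uniformly in $\bm{s}\in S_\X$.

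Squaring and summing these six inequalities on any sample path where they all hold yields
\begin{align*}
\Verts{\Mean(\tilde{\bm f}\mid \bm{s},\D_n)-\tilde{\bm f}(\bm{s})}^2 = \sum_{i=1}^{6}\bigl(\mean_i-\tilde f_i\bigr)^2 \leq \sum_{i=1}^{6}\beta_i^2\,\var_i = \bm\beta^\top\Var\,\bm\beta = \Verts{\bm\beta^\top\Var^{\frac{1}{2}}}^2,
\end{align*}
where the penultimate equality uses that $\Var$ is diagonal with entries $\var_i$ by the construction of the joint posterior displayed just before the lemma. Taking square roots recovers exactly the inequality between Euclidean norms appearing in the statement, with $\bm\beta=[\beta_1,\dots,\beta_6]^\top$.

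The remaining step is to upgrade the six individual scalar high-probability events into a single joint event of probability at least~$\delta$. I would do this with a union bound: choose the per-coordinate confidences $\tilde\delta_i$ so that $\sum_{i=1}^{6}(1-\tilde\delta_i)\leq 1-\delta$, which forces all six scalar inequalities to hold simultaneously with probability at least~$\delta$ while letting each component of~$\bm\beta$ inherit the standard logarithmic dependence on $1-\tilde\delta_i$ from~\cite[Lemma 1]{beckers2019automatica}. The subtlety I expect to be the main obstacle is precisely this aggregation: because the six coordinate regressions share the same input matrix~$X$ and Gram matrix~$K$, one has to argue that the per-coordinate concentration events live on the same probability space and can be intersected without losing the RKHS norm dependence---either by invoking independence of the additive measurement noise across output channels, or by absorbing the factor of six into the components of~$\bm\beta$ through the union bound itself.
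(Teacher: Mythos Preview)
Your argument is correct and is precisely the mechanism underlying the cited result: apply the scalar RKHS concentration bound coordinate-wise, exploit the diagonal structure of~$\Var$ to reassemble the vector norm, and combine the six events via a union bound absorbed into~$\bm\beta$. The paper's own proof, however, does not spell any of this out; it is the single sentence ``It is a direct implication of~\cite[Lemma 1]{beckers2019automatica}.'' In other words, you have reconstructed the proof of the lemma being cited rather than merely invoking it, so your write-up is strictly more detailed than what the paper provides, but the route is the same.
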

\begin{proof}
It is a direct implication of~\cite[Lemma 1]{beckers2019automatica}.
\end{proof}
With~\cref{as:rkhs} and the fact, that universals kernels exist which generate bounded predictions with bounded derivatives, see~\cite{beckers:ecc2016}, GP models can be used as oracle to fulfill~\cref{ass:1}. In this case, the prediction error bound is given by $\bar{\rho}_n(\bm{s})\coloneqq \Vert\bm \beta\tran \Var^{\frac{1}{2}}([\hat{\f}_n(\x)^\top,\hat{\f}_{\omega,n}(\bm{s})^\top]^\top\vert \bm{s},\D_n)\Vert$ as shown in~\cref{lemma:boundederror}.
\begin{rem}
An efficient greedy algorithm can be used to find $\bm{\beta}$ based on the maximum information gain~\cite{srinivas2012information}.
\end{rem}
\subsection{Tracking control}
For the tracking control, we consider a given desired trajectory $\x_d(t)\colon\R_{t\geq 0}\to\X,\x_d\in\mathcal{C}^4$. The tracking error is denoted by $\z_0(t)=\x(t)-\x_d(t)$.  Before we propose the main theorem about the safe learning-based tracking control law, the feedback gain matrix $G_n$ is introduced. As part of the controller, $G_n$ penalizes the position tracking error and the result is fed back to both inputs, the force control $u$ and the torque control $\bm{\tau}$ of the system~\cref{for:system}. The feedback gain matrix is allowed to be adapted with any update of the oracle based on a new data set $\D_n$ to lower the feedback gains when the oracle's accuracy is improved. 
\begin{propy}
\label{ass:Lyap}
The matrix $G_n\in\R^{3\times 6}$ is chosen such that there exist a symmetric positive definite matrix $P_n\in\R^{6\times 6}$ and a positive definite matrix $Q_n\in\R^{6\times 6}$ which satisfy the Lyapunov equation
\begin{align}
	P_n\big(A-BG_n\big)+\big(A-BG_n\big)^\top P_n= -Q_n
\end{align}
for each switch of $n(t)$.
\end{propy}
\Cref{ass:Lyap} is satisfied if the real parts of all eigenvalues of $(A-BG_n)$ are negative. For example, this can be achieved by any $G_n=[G_{n,1},G_{n,2}]$, where $G_{n,1},G_{n,2}\in\R^{3\times 3}$ are positive definite diagonal matrices.
\begin{thm}
\label{thm:1}
Consider the underactuated rigid-body system given by~\cref{for:system} with unknown dynamics $\f,\f_\omega$ and the existence of an oracle satisfying~\cref{ass:1,ass:2}. Let $G_{z_1},G_{z_2}\in\R^{3\times 3}$ be positive definite symmetric matrices. With~\cref{ass:Lyap}, the control law
\begin{align}
	\bm{\tau}&=J(\bm{e}\!\times\!(R^\top\! \bm{g}_{\ddot{d}}-\check{\bm\omega}^2\bm{e}u-2\check{\bm\omega}\bm{e}\dot{u})u^{-1})\!-\! J\bm{\omega}\!\times\!\bm{\omega}\! -\! \hat{\f}_\omega(\bm{s}),\notag\\
	\ddot{u}&=\bm{e}^\top(R^\top \bm{g}_{\ddot{d}}-\check{\bm\omega}^2\bm{e}u-2\check{\bm\omega}\bm{e}\dot{u}),\label{for:ctrllaw}
\end{align}
with the desired virtual control input derivative	
\begin{align}	
	\g_{\ddot{d}}&=m\bm{p}_d^{(4)}-G_n\left(\frac{\partial \dot{\hat{\x}}}{\partial \x}\dot{\hat{\x}}-\ddot{\x}_d\right)-BP_n(\dot{\hat{\x}}-\dot{\x}_d)\notag\\
&-(G_{z_1}+G_{z_2})\left(\dot{\g}-m\bm{p}_d^{(3)}+G_n(\dot{\hat{\x}}-\dot{\x}_d)+\frac{\partial \hat{\f}_n}{\partial \x}\dot{\hat{\x}}\right)\notag\\
&-(G_{z_2}G_{z_1}+I_3)\left(\bm{g}-m \ddot{\bm{p}}_d+G_n\z_0+\hat{\f}_n(\x)\right)\notag\\
&-G_{z_2}B^\top P_n \z_0-\frac{\partial}{\partial \x}\Big[\frac{\partial{\hat{\f}}_n}{\partial \x}\dot{\hat{\x}}\Big]\dot{\hat{\x}}.\label{for:gddd}\\
\dot{\hat{\x}}&=A\x+B\left(\g(R,u)+\hat{\f}_n(\x)\right)\label{for:hatx}
\end{align}
guarantees that the tracking error is uniformly ultimately bounded in probability by 
\begin{align}
\Prob\{\Verts{\z_0(t)}\leq \max_{\bm{s}\in S_\X}\bar{\rho}_{n_\text{end}}(\bm{s})b_{n_\text{end}},\forall t\geq T\}\geq\delta
\end{align}
with $b_{n_\text{end}}=(\max\{\eig(P_{n_\text{end}}),1\}/\min\{\eig(P_{n_\text{end}}),1\})^{1/2}$, and time constant $T\in\R_{\geq 0}$ on $S_\X$.
\end{thm}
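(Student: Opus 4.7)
The plan is a three-layer backstepping Lyapunov argument followed by a switched-system upgrade that exploits \cref{ass:2}. I would first introduce three backstepping error coordinates that mirror the nested parenthesized expressions in \cref{for:gddd}, namely
\begin{align*}
\z_0 &= \x - \x_d,\\
\z_1 &= \g + \hat{\f}_n(\x) - m\ddot{\bm{p}}_d + G_n\z_0,\\
\z_2 &= \dot{\g} + \tfrac{\partial \hat{\f}_n}{\partial \x}\dot{\hat{\x}} - m\bm{p}_d^{(3)} + G_n(\dot{\hat{\x}}-\dot{\x}_d) + G_{z_1}\z_1 + B^\top P_n\z_0.
\end{align*}
Using $\dot{\x} = \dot{\hat{\x}} + B\bm{\rho}(\x)$ and direct differentiation, these should satisfy the cascade
\begin{align*}
\dot{\z}_0 &= (A-BG_n)\z_0 + B\z_1 + B\bm{\rho}(\x),\\
\dot{\z}_1 &= -B^\top P_n\z_0 - G_{z_1}\z_1 + \z_2 + \bm{\eta}_1,\\
\dot{\z}_2 &= -\z_1 - G_{z_2}\z_2 + \bm{\eta}_2,
\end{align*}
with residuals $\bm{\eta}_1,\bm{\eta}_2$ that vanish when $\bm{\rho},\bm{\rho}_\omega\equiv 0$ and are bounded by multiples of $\bar{\rho}_n(\bm{s})$ because of the smoothness built into \cref{ass:1}.

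The key algebraic step, and the main technical obstacle, is verifying that the $\bm{\tau},\ddot{u}$ selected in \cref{for:ctrllaw} actually realise $\ddot{\g} = \g_{\ddot{d}}$ up to the learning error. Differentiating $\g = R\bm{e}u$ twice gives
\begin{align*}
R^\top\ddot{\g} = \check{\bm{\omega}}^2\bm{e}u + (\dot{\bm{\omega}}\times\bm{e})u + 2\check{\bm{\omega}}\bm{e}\dot{u} + \bm{e}\ddot{u}.
\end{align*}
Projecting this identity onto $\bm{e}$ (using $\bm{e}^\top\check{\bm{\omega}}\bm{e}=0$) yields the stated formula for $\ddot{u}$, while taking $\bm{e}\times$ of the orthogonal complement determines $\dot{\bm{\omega}}$ up to its unobservable component along $\bm{e}$. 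Substituting the proposed $\bm{\tau}$ into $\dot{\bm{\omega}} = J^{-1}(J\bm{\omega}\times\bm{\omega}+\bm{\tau}+\hat{\f}_\omega(\bm{s})+\bm{\rho}_\omega(\bm{s}))$ then produces the clean identity
\begin{align*}
\ddot{\g} = \g_{\ddot{d}} + u\,RJ^{-1}(\bm{\rho}_\omega\times\bm{e}),
\end{align*}
which is exactly the form required to close the $\z_2$-equation above.

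With the cascade in hand, I would take the composite Lyapunov function $V = \z_0^\top P_n\z_0 + \tfrac{1}{2}\Verts{\z_1}^2 + \tfrac{1}{2}\Verts{\z_2}^2$. The role of the $B^\top P_n\z_0$ correction inside the definition of $\z_1$ and of $\z_1$ inside $\z_2$ is precisely to cancel the cross terms $2\z_0^\top P_nB\z_1$ and $\z_1^\top\z_2$ in $\dot V$; combined with \cref{ass:Lyap} on the $\z_0$-quadratic, this yields
\begin{align*}
\dot V \leq -\z_0^\top Q_n\z_0 - \leig(G_{z_1})\Verts{\z_1}^2 - \leig(G_{z_2})\Verts{\z_2}^2 + c_n\bar{\rho}_n(\bm{s})\sqrt{V}
\end{align*}
after Cauchy--Schwarz on the $\bm{\rho},\bm{\rho}_\omega$ terms, which a standard Young's-inequality step converts into $\dot V \le -\alpha_n V + \gamma_n\bar{\rho}_n^2$. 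Finally, \cref{ass:2} freezes $n(t)=n_\text{end}$ for $t\geq T$, so the comparison lemma gives UUB of $V$; extracting $\Verts{\z_0}$ via $\min\{\eig(P_{n_\text{end}}),1\}\Verts{\z_0}^2\leq V$ produces precisely the constant $b_{n_\text{end}}$ stated in the theorem, while the probability $\delta$ is inherited directly from \cref{ass:1}. The hard part really is the $\ddot{\g}$-identity; once that is in hand, the rest is a routine backstepping-plus-switching Lyapunov argument.
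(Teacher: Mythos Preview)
Your plan is essentially the paper's own proof: the three backstepping coordinates $\z_0,\z_1,\z_2$ you write down coincide exactly with the paper's $\z_0$, $\z_1=\g-\g_d$, $\z_2=\dot{\g}-\g_{\dot d}$, and the $\ddot{\g}$ identity you single out as the crux is indeed the key algebraic step in the paper (your residual should read $uR\big[(J^{-1}\bm{\rho}_\omega)\times\bm e\big]$ rather than $uRJ^{-1}(\bm{\rho}_\omega\times\bm e)$, but this is immaterial for the bound). One small slip to fix: with $V=\z_0^\top P_n\z_0+\tfrac12\Verts{\z_1}^2+\tfrac12\Verts{\z_2}^2$ the cross term $2\z_0^\top P_nB\z_1$ is only half-cancelled by the $-B^\top P_n\z_0$ in $\dot{\z}_1$; use $V=\tfrac12\z_0^\top P_n\z_0+\tfrac12\Verts{\z_1}^2+\tfrac12\Verts{\z_2}^2$ as the paper does and the cancellation goes through.
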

\begin{rem}
The control law does not depend on any state derivatives, which are typically noisy in measurements. The derivatives, i.e. the translational and angular accelerations, are only necessary for the training of the oracle, see~\cref{for:dataset}, which can often deal with noisy data. For instance, GP models can handle additive Gaussian noise on the output~\cite{rasmussen2006gaussian}.
\end{rem}
We prove the stability of the closed-loop with the proposed control law with multiple Lyapunov function, where the $n$-th function is active when the oracle predicts based on the corresponding training set $\D_n$. Note that due to a finite number of switching events, the switching between stable systems can not lead to an unbounded trajectory, see~\cite{liberzon1999basic}.\\
\begin{proof}
The term $\g(R,\bm{u})$ in~\cref{for:system1} is assumed as virtual control input with the desired force
\begin{align}
	\g_d(t,n,\x)=m \ddot{\bm{p}}_d-G_n\z_0-\hat{\f}_n(\x)\label{for:gd}
\end{align}
where $G_n$ can change by the switching of $n(t)$. The tracking error dynamics are given by
\begin{align}
	\dot{\z}_0&=A\x+B\left(\g(R,u)+\hat{\f}_n(\x)+\bm{\rho}_n(\x)\right)-\begin{bmatrix}
	\dot{\bm{p}}_d\\ \ddot{\bm{p}}_d
	\end{bmatrix}.\label{for:err_dyn}
\end{align}
Using the desired acceleration $\ddot{\bm{p}}_d$ of~\cref{for:gd} in~\cref{for:err_dyn} leads to
\begin{align}
	\dot{\z}_0&=\big(A-BG_n\big)\z_0+B\big(\g(R,u)-\g_d(t,n,\x)+\bm{\rho}_n(\x)\big).\notag
\end{align}
In the next step, the boundedness of the tracking error $\z_0$ is proven. For this purpose, we use the matrices $P_n,Q_n$ of~\cref{ass:Lyap} to construct the Lyapunov function~$V_{0,n}(\z_0)=0.5\z_0^\top P_n \z_0$ and compute its evolution
\begin{align}
	\dot{V}_{0,n}&= -\z_0^\top Q_n \z_0+(B^\top P_n \z_0)^\top \big(\g(R,u)-\g_d+\bm{\rho}_n(\x)\big).\notag
\end{align}
The first summand is negative for all $\z_0\in\R^6$. In the next step, we extend the previous Lyapunov function with the error term $\z_1\in\R^3$ with $\bm{z}_1(t,n,\x,R,u)=\g(R,u)-\g_d(t,n,\x)$, which describes the error between the virtual and the desired control input. Thus, it leads to a switching Lyapunov function $V_{1,n}(\z_0,\z_1)=V_{0,n}+0.5\z_1^\top\z_1 \geq 0$. The derivative of $V_{1,n}$ leads to
\begin{align}
\dot{V}_{1,n}=\dot{V}_{0,n}+\bm{z}_1^\top\left(\dot{\g}-m\bm{p}_d^{(3)}+G_n\dot{\z}_0+\dot{\hat{\f}}_n(\x)\right),\label{for:dV1}
\end{align}
where $\bm{p}_d^{(3)}$ denotes the third time-derivative of the desired position $\bm{p}_d$. Following again the idea of a desired virtual input as in~\cref{for:gd}, we construct a desired value of $\dot{\g}$ with
\begin{align}
\g_{\dot{d}}=m\bm{p}_d^{(3)}-G_n(\dot{\hat{\x}}-\dot{\x}_d)-B^\top P_n\z_0-G_{z_1}\z_1-\frac{\partial{\hat{\f}}_n}{\partial \x}\dot{\hat{\x}}.\label{for:gdd}
\end{align}
Instead of having dependencies on the typical noisy state derivative $\dot{\x}$, we use the estimation $\dot{\hat{\x}}\in\R^6$ given by~\cref{for:hatx}, which only contains the known parts of the system dynamics~\cref{for:system1}. Then, the expression~\cref{for:gdd} is used to substitute $\dot{\bm{g}}$ in~\cref{for:dV1}. This leads to the evolution
\begin{align}
\dot{V}_{1,n}&=-\bm{z}_0^\top Q_n \bm{z}_0-\z_1^\top G_{z_1} \z_1+\z_0^\top P_n B \bm{\rho}_n(\x)\notag\\
&+\z_1^\top\Big(\Big[ \frac{\partial \hat{\f}_n}{\partial \x}+G_n\Big]B\bm{\rho}_n(\x)+\dot{\g}-\g_{\dot{d}}\Big).
\end{align}
Next, we define the error $\z_2\in\R^3$ with
\begin{align}
\bm{z}_2(t,n,\x,R,u)=\dot{\g}(R,u)-\g_{\dot{d}}(t,n,\x,R,u),
\end{align}
and an extended Lyapunov function
\begin{align}
V_{n}(\z_0,\z_1,\z_2)=V_{1,n}+\frac{1}{2}\z_2^\top \z_2\geq 0.\label{for:V2}
\end{align}
The derivative of $V_n$ leads to
\begin{align}
\dot{V}_{n}\!=\!\dot{V}_{1,n}\!+\!\bm{z}_2^\top\Big(&\ddot{\g}\!-\!m\bm{p}_d^{(4)}\!+\!G_n\ddot{\z}_0\!+\!BP_n\dot{\z}_0\!+\!\frac{d}{dt}\Big[\frac{\partial {\hat{\f}}_n}{\partial \x}\dot{\hat{\x}}\Big]\Big)\notag
\end{align}
and we construct a desired value of $\ddot{\g}$ with $\g_{\ddot{d}}$ given by~\cref{for:gddd}. Then, it is substituted into $\dot{V}_{n}$ to obtain
\begin{align}
\dot{V}_{n}&= -\bm{z}_0^\top Q_n \bm{z}_0-\z_1^\top G_{z_1}\z_1-\z_2^\top G_{z_2}\z_2\label{for:UPV2}\\
&+\!(\z_0^\top P_n\!+\!\z_1^\top D(\x) \!+\!\z_2^\top E(\x))B\bm{\rho}_n(\x)\!+\!\z_2^\top(\ddot{\g}\!-\!\g_{\ddot{d}})\notag\\
D(\x)&\coloneqq \frac{\partial \hat{\f}_n}{\partial\x}+G_n\\
E(\x)&\coloneqq BP+G_{z_1}D+G_n\frac{\partial \dot{\hat{\x}}}{\partial \x}+\frac{\partial}{\partial \x}\Big(\frac{\partial{\hat{\f}}_n}{\partial \x}\dot{\hat{\x}}\Big).
\end{align}
To eliminate the last summand in~\cref{for:UPV2} except for the estimation error $\bm{\rho}_\omega$, we note that
\begin{align}
\ddot{\g}(R,u)=R(\check{\bm\omega}^2\bm{e}u+2\check{\bm\omega}\dot{u}+\dot{\bm\omega}\times \bm{e}u+\bm{e}\ddot{u}),
\end{align}
such that $\ddot{\g}-\g_{\ddot{d}}=-RJ^{-1}\bm{\rho}_\omega(\bm{s})\times\bm{g}$ for
\begin{align}
\dot{\bm\omega}\times \bm{e}u+\bm{e}\ddot{u}=R^\top\g_{\ddot{d}}-\check{\bm\omega}^2\bm{e}u-2\check{\bm\omega}\dot{u}\label{for:inpu}.
\end{align}
Using~\cref{for:inpu} and~\cref{ass:1}, the evolution of the Lyapunov function $V$ can be upper bounded by
\begin{align}
\Prob\{\dot{V}_{n}&\leq -\bm{z}_0^\top Q_n \bm{z}_0-\z_1^\top G_{z_1}\z_1-\z_2^\top G_{z_2}\z_2\label{for:UPV3}\\
&+\Verts{(\z_0^\top P_n+\z_1^\top \bar{D} +\z_2^\top \bar{E})B+\bar{c}\z_2}\bar{\rho}_n(\bm{s})\}\geq\delta,\notag
\end{align}
with the upper bounds $\bar{D},\bar{E}\in\R^{3\times 3}$ and $\bar{c}\in\R$, which exist due to~\cref{ass:1}. Thus, the evolution is negative with probability $\delta$ for all $\z=[\z_1^\top,\z_2^\top,\z_3^\top]^\top$ outside a ball
\begin{align}
\Verts{\z}>\max_{\bm{s}\in S_\X}\bar{\rho}_n(\bm{s})\underbrace{\frac{\Verts{P_n B}+\Verts{\bar{D} B}+\Verts{\bar{E} B}+\bar{c}}{\min\{\eig(Q_n),\eig(G_{z_1}),\eig(G_{z_2})\}}}_{\eqqcolon\lambda_n},\notag
\end{align}
where a maximum of $\bar{\rho}_n$ exists regarding to~\cref{ass:1}. Finally, the Lyapunov function~\cref{for:V2} is lower and upper bounded by $\alpha_1(\Verts{\z})\leq V_n(\z)\leq\alpha_2(\Verts{\z})$, where $\alpha_1(r)=0.5\min\{\eig(P_n),1\}r^2$ and $\alpha_2(r)=0.5\max\{\eig(P_n),1\}r^2$. Thus, we can compute the radius $b_n\in\R_{\geq_0}$ of the bound by
\begin{align}
b_n=\sqrt{\frac{\max\{\eig(P_n),1\}}{\min\{\eig(P_n),1\}}}.\label{for:bound}
\end{align}
Since~\cref{ass:2} only allows a finite number of switches, there exists a time $T\in\R_{\geq 0}$ such that $n(t)=n_{\text{end}}\in\N$ for all $t\geq T$. Thus, $\Prob\{\Verts{\z_0(t)}\leq \max_{\bm{s}\in S_\X}\bar{\rho}_{n_\text{end}}(\bm{s})b_{n_\text{end}},\forall t\geq T\}\geq\delta$.
\end{proof}
\begin{rem}
Extension to the rotation are analogously to perform with additional terms in the Lyapunov function as given in~\cite{mahony2004robust,frazzoli2000trajectory}.
\end{rem}
\begin{rem}
The torque control law of~\cref{for:ctrllaw} has a singularity at $u=0$ as without control force $u$ no tracking control is possible in general. To overcome the singularity, a reasonable trajectory planning can be performed, see~\cite{mahony2004robust} or the control torques are set to zero at this point. In practice, this leads
to chattering that can be alleviated by a slight modification of the control law to remove the singularity, see~\cite{khalil1996noninear}.
\end{rem}
\begin{rem}
The proposed approach allows multiple ways of data collection and adaptation of the feedback matrix $G_n$. A possible strategy can be time-triggered where new data points are recurrently attached to the data set $\D_n$ to improve the prediction accuracy of the oracle and the magnitude of $G_n$ is decreased over time. More advanced strategies can be based on the model uncertainty or the tracking error as used in, e.g.,~\cite{umlauft:TAC2020}.
\end{rem}
The proof shows that the bound of the tracking error~\cref{for:bound} depends on the prediction error $\bar{\rho}_n$ of the oracle. 
\section{Numerical example}\label{sec:num}
In this section, we present a numerical example of a quadrocopter within an a-priori unknown wind field. The dynamics of the quadrocopter are described by~\cref{for:system} with mass $m=\SI{1}{\kilogram}$, inertia $J=\diag(2,2,1)\si{\kilogram\metre^2}$ and the direction $\bm{e}=[0,0,1]^\top$ of the force input $u$. As unknown dynamics~$\f$ and~$\f_\omega$, we consider an arbitrarily chosen wind field and the gravity force given by
\begin{align}
\f(\x)&=[0,0,2\sin(x_1)+\exp(-5x_2^2)-9.81]^\top\\
\f_\omega(\bm{s})&=[2\exp(-x_1^2-x_2^2)+\omega_1\cos(x_2)^2,0,0]^\top.
\end{align}
A GP model is used as oracle to predict the z-component of $\f(\x)$ and the x-component of $\f_\omega(\bm{s})$ with the squared exponential kernel, see~\cite{rasmussen2006gaussian}. The prior knowledge about the existing gravity in $\f(\x)$ is considered as estimate in the mean function of the GP with $m_3(\bm{s})=-10$. At starting time $t=0$, the data set $\D_n$ is empty such that the prediction is solely based on the mean function. The initial position of the quadrocopter is $\bm{p}(0)=[0.1,-0.1,0]^\top$ whereas the desired trajectory starts at $\bm{p}_d(0)=[0,0,0]^\top$ due to an assumed position measurement error. In this example, we employ an online learning approach which collects a new training point every $\SI{0.1}{\second}$ such that the total number of training points is $N=5n$. In \cref{fig:tra}, the first $\SI{3}{\second}$ of the desired (dashed) and the actual trajectory (solid) is shown. The crosses denote the collected training data. Each training point consists of the actual state $\bm{s}$ and $\bm{y}$ as given by~\cref{for:dataset}. Since the training point depends on the typically noisy measurement of the accelerations $\ddot{\bm{p}}$ and $\dot{\bm{\omega}}$, Gaussian distributed noise~$\mathcal{N}(0,0.08^2I_3)$ is added to the measurement. The GP model is updated every second until $t=\SI{12}{\second}$, where the last 10 collected training points are appended to the set $\D_n$ and the hyperparameters are optimized by means of the likelihood function, see~\cite{steinwart2008support}. Thus, the function $n$ is the integer part of $2t$ up to $t=\SI{12}{\second}$ given by $n(t)=\min(\SI{12}{\second},\left\lfloor 2t\right\rfloor)$. The initial feedback gain matrix is set to $G_{n=0}=[\diag(10,10,40),\diag(10,10,10)]$ and $G_{z_1}=G_{z_2}=2I_3$. 
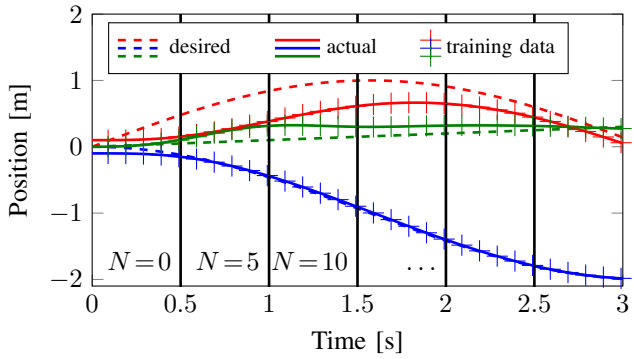
\begin{figure}[thb]
\begin{center}
\tikzsetnextfilename{tra}
	\begin{tikzpicture}
\begin{axis}[
  xlabel={Time [s]},
  ylabel={Position [m]},
  legend pos=north west,
  width=\columnwidth,
  height=5.2cm,
  ymin=-2.1,
  ymax=2,
  xmin=0,
  xmax=3,
  legend style={font=\footnotesize},
  legend cell align={left},
  legend style={row sep=-7pt},
  legend style={/tikz/every even column/.append style={column sep=0.5cm}},
  legend columns=3,transpose legend]
\addplot[color=red,dashed,line width=1pt,no marks] table [x index=0,y index=1]{data/trajectory.dat};
\addplot[color=blue,dashed,line width=1pt,no marks] table [x index=0,y index=2]{data/trajectory.dat};
\addplot[color=green!50!black,dashed,line width=1pt,no marks] table [x index=0,y index=3]{data/trajectory.dat};
\addplot[color=red,line width=1pt,no marks] table [x index=0,y index=4]{data/trajectory.dat};
\addplot[color=blue,line width=1pt,no marks] table [x index=0,y index=5]{data/trajectory.dat};
\addplot[color=green!50!black,line width=1pt,no marks] table [x index=0,y index=6]{data/trajectory.dat};
\addplot[color=red,only marks,mark=+,mark size=4pt] table [x index=0,y index=1]{data/data.dat};
\addplot[color=blue,only marks,mark=+,mark size=4pt] table [x index=0,y index=2]{data/data.dat};
\addplot[color=green!50!black,only marks,mark=+,mark size=4pt] table [x index=0,y index=3]{data/data.dat};
\addplot[color=black,line width=1pt,no marks] coordinates {(0.5,-5) (0.5,5)};
\addplot[color=black,line width=1pt,no marks] coordinates {(1,-5) (1,5)};
\addplot[color=black,line width=1pt,no marks] coordinates {(1.5,-5) (1.5,5)};
\addplot[color=black,line width=1pt,no marks] coordinates {(2,-5) (2,5)};
\addplot[color=black,line width=1pt,no marks] coordinates {(2.5,-5) (2.5,5)};
\node[anchor=south east] at (axis cs:0.5,-2) {$N\!=\!0$};
\node[anchor=south east] at (axis cs:1,-2) {$N\!=\!5$};
\node[anchor=south east] at (axis cs:1.5,-2) {$N\!=\!10$};
\node[anchor=south east] at (axis cs:2,-2) {$\ldots$};
\legend{\phantom{a},desired,\phantom{a},\phantom{a},actual,\phantom{a},\phantom{a},training data,\phantom{a}};
\end{axis}
\end{tikzpicture} 
	\vspace{-0.5cm}
	\caption{A segment of the desired (dashed) and actual trajectory (solid). Every $\SI{0.1}{\second}$ a training point (cross) is recorded. Every $\SI{0.5}{\second}$ the oracle is updated based on all collected training points $N$ up to this point. The additional training data allows to refine the model such that the tracking error is decreasing.}
	\vspace{-0.2cm}
	\label{fig:tra}
\end{center}
\end{figure}
In this example, we adapt the feedback gain matrix based on the number of training points. When the GP model is updated with new training data, the feedback gains are decreased by $G_{n}=0.9^n G_{n=0}$. Thus, after the first update, the feedback gains are $90\%$ of the initial gains, see~\cref{fig:VK}. The simulation time is $\SI{14}{\second}$.~\Cref{fig:tra3d} visualizes that the actual position (solid) of the quadrocopter converges to a very tight set around the desired position (dashed). The effects of the switching to the updated GP model are more noticeable in the evolution of the Lyapunov function in~\cref{fig:VK}. The function might increase after an update of the GP model due to the change of $G_n$ and the new prediction accuracy of the GP model. However, the function converges to a bounded set as proposed in~\cref{thm:1} after the finite number of switching events.
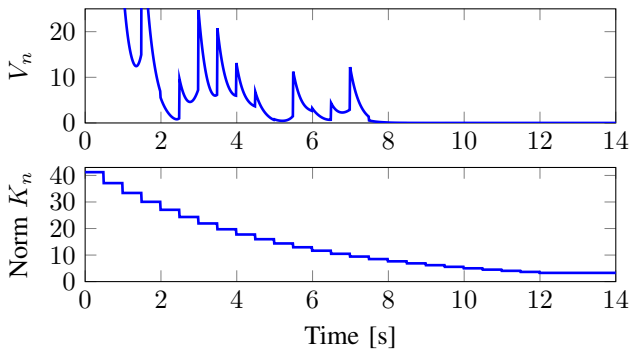
\begin{figure}[thb]
\begin{center}
	\tikzsetnextfilename{VK1}
\begin{tikzpicture}
\begin{axis}[
name=plot1,
  ylabel={$V_n$},
  legend pos=north west,
  width=\columnwidth,
  height=3.1cm,
  restrict y to domain=-1:100,
  ymin=0,
  ymax=25,
  xmin=0,
  xmax=14,
  legend style={font=\footnotesize},
  legend cell align={left}]
\addplot[color=blue,line width=1pt,no marks] table [x index=0,y index=1]{data/VK.dat};
%\legend{data-based optimization,closed-loop optimization};
\end{axis}
\begin{axis}[
name=plot2,
   at=(plot1.below south east), anchor=above north east,
  xlabel={Time [s]},
  ylabel={Norm $K_n$},
  legend pos=north west,
  width=\columnwidth,
  height=3.1cm,
  ymin=0,
  ymax=43,
  xmin=0,
  xmax=14,
  legend style={font=\footnotesize},
  legend cell align={left}]
\addplot[color=blue,line width=1pt,no marks] table [x index=0,y index=2]{data/VK.dat};
%\legend{data-based optimization,closed-loop optimization};
\end{axis}
\end{tikzpicture} 
	\vspace{-0.7cm}
	\caption{Top: Lyapunov function converges to a tight set around zero. The jumps occur when the oracle is updated. Bottom: Norm of the feedback gain matrix is decreasing due to improved accuracy of the oracle.}
	\vspace{-0.5cm}
	\label{fig:VK}
\end{center}
\end{figure}
\begin{figure}[thb]
\begin{center}
\tikzsetnextfilename{tra3d}
	\begin{tikzpicture}
\begin{axis}[
  xlabel={x-position [m]},
  ylabel={y-position [m]},
  zlabel={z-position [m]},
  xlabel shift = -10 pt,
  ylabel shift = -10 pt,
  legend pos=north west,
  width=\columnwidth,
  height=6.5cm,
  legend style={font=\footnotesize},
  legend cell align={left},
  view={-30}{20},
  legend style={at={(0,1.01)},anchor=north west,/tikz/every even column/.append style={column sep=0.2cm}},
  legend columns=3]
\addplot3[color=red,dashed,line width=1pt,no marks] table [x index=1,y index=2,z index=3]{data/trajectory.dat};
\addplot3[color=blue,line width=1pt,no marks] table [x index=4,y index=5,z index=6]{data/trajectory.dat};
\addplot3[only marks,color=green!50!black,line width=1pt,mark=o] coordinates {(0.1,-0.1,0)};
\addplot3[only marks,color=green!50!black,line width=1pt,mark=o] coordinates {(0,0,0)};
\legend{desired,actual,initial state};
\end{axis}
\end{tikzpicture} 
	\vspace{-0.4cm}
	\caption{Actual trajectory converges to desired trajectory.}
	\vspace{-0.5cm}
	\label{fig:tra3d}
\end{center}
\end{figure}
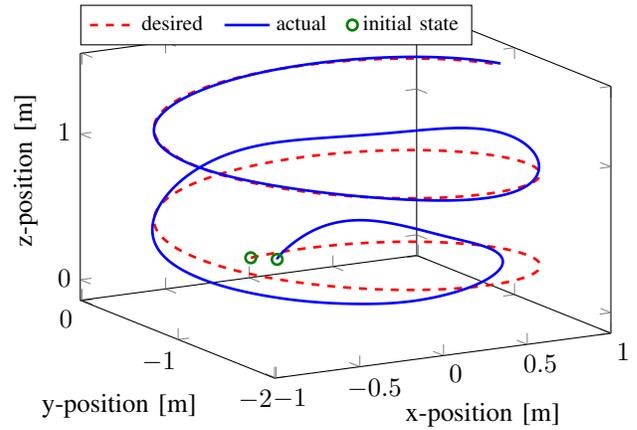
%%%%%%%%%%%%%%%%%%%%%%%%%%%%%%%%%%%%%%%%%%%%%%%%%%
\section*{Conclusion}
We present an online learning-based tracking control law for a class of underactuated systems with unknown dynamics typical for aerial and underwater vehicles. Using a various type of oracles, the tracking error is proven to be bounded in probability and the size of the bound is explicitly given. The online fashion of the proposed approach allows to improve the quality of the oracle over time and, thus, to improve the tracking performance. Even though no particular oracle is assumed, we show that Gaussian process models fulfill all requirements to be used as oracle in the proposed control scheme. Finally, a numerical example visualizes the effectiveness of the control law.

\bibliography{mybib}
\bibliographystyle{ieeetr}

\end{document}